\newtheorem{lemma}{Lemma}
\newtheorem{proposition}{Proposition}
\newcommand{\R}{\mathbb{R}}
\newcommand{\E}{\mathbb{E}}
\DeclareMathOperator*{\argmin}{arg\,min}
\def\sU{{\mathsf U}}
\def\sX{{\mathsf X}}
\def\sY{{\mathsf Y}}
\def\sZ{{\mathsf Z}}
\def\deq{\triangleq}
\newcommand\blfootnote[1]{%
	\begingroup
	\renewcommand\thefootnote{}\footnote{#1}%
	\addtocounter{footnote}{-1}%
	\endgroup
}
\title{
One if by Land, Two if by Sea, Three if by Four Seas, and More to Come --- Values of Perception, Prediction, Communication, and Common Sense  in Decision Making}
\author{Aolin Xu} 
\date{}
\begin{document}
	
	\maketitle
	\blfootnote{xuaolin@gmail.com}

\begin{abstract}
This work aims to rigorously define the values of perception, prediction, communication, and common sense in decision making.
  The defined quantities are decision-theoretic, but have information-theoretic analogues, e.g., they share some simple but key mathematical properties with Shannon entropy and mutual information, and can reduce to these quantities in particular settings.
 One interesting observation is that, the value of perception without prediction can be negative, while the value of perception together with prediction and the value of prediction alone are always nonnegative.
 The defined quantities suggest answers to practical questions arising in the design of autonomous decision-making systems. Example questions include: Do we need to observe and predict the behavior of a particular agent? How important is it? What is the best order to observe and predict the agents?
 The defined quantities may also provide insights to cognitive science and neural science, toward the understanding of how natural decision makers make use of information gained from different sources and operations.
\end{abstract}
\section{Introduction}

\subsection{Motivation}
The design of autonomous decision-making systems motivates us to formulate and answer the following questions.
\begin{itemize}[leftmargin=*]
\item
What is the value of observing the world, including any agent in it, when making decisions, compared to not observing the world at all?
\item
When only part of the world can be observed but the rest can be predicted, what is the value of predicting the unobserved part of the world?
\item 
When there are multiple agents in the world to be considered in decision making,
how important is it for the decision maker to observe and predict the behavior of each agent?  
\item
What is the best order to perform perception and prediction on these agents?
\end{itemize} 
Answering the first two questions can help us quantify the importance of the perception and prediction subsystems in a decision-making system.
The latter two questions are of particular interest when the limited computational resource is a concern in the design of the decision-making system.
In this work, we attempt to find a rigorous path to formalize the above questions and derive answers to them.
We define the values of key operations for decision making, including perception and prediction, and extend to communication and reasoning with common sense; we also extend these definitions to decision-making problems with multiple agents.
 One interesting observation is that, the value of perception without prediction can be negative, while the value of perception together with prediction and the value of prediction alone are always nonnegative.
We analyze the mathematical properties of these values and draw connections to information-theoretic measures.
The defined quantities may also provide insights to cognitive and neural sciences, toward the understanding of how natural decision makers make use of information gained from different sources and operations.

\subsection{Related works}\label{appd:survey}
\noindent\textbf{Value of information}

The formal study of the value of information in decision making can be traced back at least to Blackwell’s work on the comparison of experiments through the reduction of expected loss in the binary hypothesis setup \cite{blackwell1951comparison, blackwell1953equivalent}. 
Lindley \cite{lindley1956measure} applied Shannon's notion of mutual information to measure the informativeness of experiments in the setup of general Bayesian parameter estimation. 
DeGroot \cite{DeGroot62} considered general Bayesian decision making and generalized the notions of Shannon entropy and mutual information through arbitrary loss functions.
Further theoretical developments of this concept include Howard \cite{Howard_voi}, Gr\"unwald and Dawid \cite{max_gen_ent}, and Xu and Raginsky \cite{MER22} with applications in economics, robust decision making, and machine learning.

This work can be viewed as an extension of the above line of research. We further generalize the notion of entropy and mutual information through the analysis of the value, or the reduction of minimum achievable expected loss, of various operations in decision making. In those earlier works, the ``value of information'' can be viewed as the special case of the value of perception together with prediction as defined in this work. 
We also extend the discussion to cover common sense and communication as well. 
There is no dedicated study in the literature on the values of the fine-grained and diverse operations considered in this work yet. 
The decision-making problem considered here is also more general, due to the more general form of the loss function we use.
The loss function we use takes both the observable and unobservable states of the world, as well as the action into consideration, which enables the setup to cover a wide range of decision-making, estimation, and even game-playing problems.

\noindent\textbf{Information consumption in perception-prediction-action loop}

A potentially striking insight from the definitions proposed in this work is that the narrow value of perception without prediction can be negative, whereas the value of perception together with prediction and the value of prediction alone are provably nonnegative. In other words, just having sensory data without properly incorporating its significance can, in some cases, worsen decision quality. A real-world myopic driving example is provided to illustrate this fact. 
This fact resonate the concept of rational inattention proposed by Sim \cite{sims2003implications, sims2006rational}: if an agent cannot process the information, potentially due to limited ``prediction'' capability, then acquiring raw data might not be worthwhile.
Empirical studies of human perception and decision making also reflect the importance of the ability to consume the information than merely acquiring the information \cite{BREHMER1992211, MALONEY20102362, Herd21, Morelli22, voi_human_ai}.
Recently, common sense reasoning in perception, prediction, and planning also becomes a trending topic in machine learning and AI research \cite{common_sense_cvpr23, AIAlignment}. Further, the formulation of artificial general intelligence may also be carried out in a value-communication based framework \cite{CUV2025}.  Gaining a principled understanding of the importance of values of information in decision making helps to advance the related research with a solid foundation.

\noindent\textbf{Agent importance and prioritization}

The identification of importance of agents in machine learning and decision making become an important problem as more autonomous or AI systems get more involved in people's daily life. 
This problem is more critical when the computation power becomes a bottleneck for simultaneously tracking or paying attention to all agents involved in decision making.
Example works in this topic include \cite{agent_priority19,goal_important19,Zhang2020InteractionGF,important_semi22} in the area of autonomous driving. 
Those works are mostly based on machine learning methods which require human labels on the importance of the agents, or based on rules from human knowledge. The results in this work can potentially be used to provide ground truth labels of agent importance to be used by the learning-based methods, e.g. the data-driven approach proposed in \cite{rank2tell}.

\section{Definitions of values of different operations}\label{sec:formulation}
\subsection{A generic decision-making problem}
This study considers a generic decision-making problem, where a decision maker needs to take an action $u$ from action space $\sU$ in order to minimize the expectation $\E[\ell(X,Y,u)]$ of a loss function $\ell:\sX\times\sY\times\sU\rightarrow\R$, where $X$ is a random observable state, $Y$ is a random unobservable state, and the expectation is taken with respect to the joint distribution $P_{X,Y}$.
$X$ being observable means that the action can be taken as a function of $X$ through a policy $\psi:\sX\rightarrow\sU$, such that $U = \psi(X)$.
The minimum achievable expected loss is defined as the risk of the decision.
This formulation encompasses several types of problems, e.g.
\begin{itemize}[leftmargin=*]
\item
One-step or open-loop stochastic control, where the loss depends on both the observed and unobserved states. This includes the settings where the loss only depends on $(X,U)$ or $(Y,U)$ as special cases.
\item 
Bayesian statistical estimation, including both regression and classification, where $Y$ needs to be estimated based on $X$. Usually the loss is considered to be only dependent on $(Y,U)$ in this case. An advantage of considering both $X$ and $Y$ in the loss function is that it makes the loss contextual, which can better model the situations where the quality of an estimate depends on its accuracy as well as the observable context it is made in.
\item 
Bayesian game play \cite{harsanyi1967i} between ego player and an agent, where $X$ includes the ego player's own state and the observable state of the agent, $Y$ represents the unknown action to be taken by the agent that statistically depends on $X$, and the loss depends on both players' states and actions.
\end{itemize}
For this decision-making problem, how well the decision can be made depends on what information or knowledge the decision maker uses when coming up with the action. Possible knowledge that can be used includes the marginal distributions $P_X$ and $P_Y$, the joint distribution $P_{X,Y}$, the observable state $X$, and even the unobservable state $Y$. There are also different ways of using $P_{X,Y}$, depending on whether it is used together with $X$ or not.
Gaining and making use of these knowledge usually requires different operations, such as common sense reasoning, perception, prediction, and even communication. In general, the more knowledge is gained, the better decision can be made.
In what follows, we present a rigorous way to define the operations for gaining different knowledge, and quantify the values of these operations when the gained knowledge is used for decision making.

\subsection{Value of common sense}
When neither $X$ nor $Y$ can be used for decision making, knowing their joint distribution can still be helpful as it encodes their statistical dependence, which is usually gained from common sense.
We can thus quantify the value of common sense as the reduction of the minimum achievable expected loss, or the risk, when using $P_{X,Y}$ in decision making, compared with when ignoring the statistical dependence between $X$ and $Y$ by using only the product of their marginals $P_X P_Y$. Formally, this value can be defined as
	\begin{align}\label{eq:vocomsen_def}
		 \E[\ell(X,Y,\bar u)] - \E[\ell(X,Y,u^*)] 
	\end{align}
    where
\begin{align}\label{eq:ubar_def}
\bar u\deq\argmin_{u\in\sU} \E[\ell(X,\bar Y,u)]
\end{align}
with $P_{X,\bar Y} = P_X P_Y$ is the best action to take when $X$ and $Y$ are treated to be independent from each other; and
\begin{align}\label{eq:u*_def}
    u^*\deq\argmin_{u\in\sU} \E[\ell(X,Y,u)]
\end{align}
is the best action to take when using the true joint distribution of $X$ and $Y$.
From the definition of $u^*$, we know that the value of common sense is always nonnegative. 

 \subsection{Value of perception}
The observable state $X$ can be used in decision making only when it is perceived. In a narrow sense, perception is the operation to get $X$ ready for use by a policy, while not taking the dependence of $Y$ on $X$ into account. We can define the value of this narrow sense of perception as the reduction of risk when using $X$ in decision making without predicting $Y$ based on $X$, compared with only using $P_{X,Y}$. Formally, it can be defined as
\begin{align}\label{eq:vopercep_def}
		 \E[\ell(X,Y,u^*)] - \E[\ell(X,Y,\bar\psi(X))] 
	\end{align}
where
\begin{align}\label{eq:psibar_def}
\bar\psi\deq\argmin_{\psi:\sX\rightarrow\sU} \E[\ell(X,\bar Y,\psi(X))]
\end{align}
is the best policy when using $X$ but treating $Y$ to be independent of $X$ in decision making.
Given any realization of $X=x$, $\bar\psi(x)$ can be realized as 
\begin{align}
\bar\psi(x) = \argmin_{u\in\sU}\E[\ell(x,Y,u)] .
\end{align}
Clearly, $\bar\psi$ only takes $X$ and the marginal distribution $P_Y$ into account when coming up with the best action to take, thus does not predict $Y$ based on $X$.
There is no definite order between $\E[\ell(X,Y,u^*)]$ and $\E[\ell(X,Y,\bar\psi(X))]$, meaning that the value of perception defined above could be negative for certain $P_{X,Y}$ and $\ell$. 
One such example is where $\sX = \sY = \sU = \{0,1\}$, with the loss function 


\begin{center}
\begin{tabular}
{ |c|c|c|c|c|c|c|c|c| }
\hline
 $(x,y,u)$ & 000 & 010 & 100 & 110 & 001 & 011 & 101 & 111\\ 
 \hline
 $\ell$ & 0 & 0 & 0 & 0 & -1 & 1 & 1 & 1 \\  
 \hline
\end{tabular}
\end{center}


the joint distribution $P_{X,Y}$ 
\begin{center}
\begin{tabular}{ |c|c|c|c|c| }
\hline
 ($x$, $y$) & 00 & 01 & 10 & 11  \\ 
 \hline
 $P_{X,Y}$ & 0.04 & 0.06 & 0.81 & 0.09  \\  
 \hline
\end{tabular}
\end{center}
and hence the conditional distribution $P_{Y|X}$ and the marginals 
\begin{center}
\begin{tabular}{ |c|c|c|c|c| }
\hline
 ($x$, $y$) & 00 & 01 & 10 & 11  \\ 
 \hline
 $P_{Y|X}$ & 0.4 & 0.6 & 0.9 & 0.1  \\  
 \hline
\end{tabular}
\quad
\begin{tabular}{ |c|c|c| }
\hline
 $x$ & 0 & 1   \\ 
 \hline
 $P_{X}$ & 0.1 & 0.9   \\  
 \hline
\end{tabular}
\quad
\begin{tabular}{ |c|c|c| }
\hline
 $y$ & 0 & 1  \\ 
 \hline
 $P_{Y}$ & 0.85 & 0.15  \\  
 \hline
\end{tabular}
\end{center}
For this example, $u^*=0$ and $\E[\ell(X,Y,u^*)]=0$, while $\bar\psi(0)=1$, $\bar\psi(1)=0$, and $\E[\ell(X,Y,\bar\psi(X))] = 0.02$, resulting in a negative value of perception. 

\begin{figure}[t]
	\centering
\includegraphics[scale = 0.3]{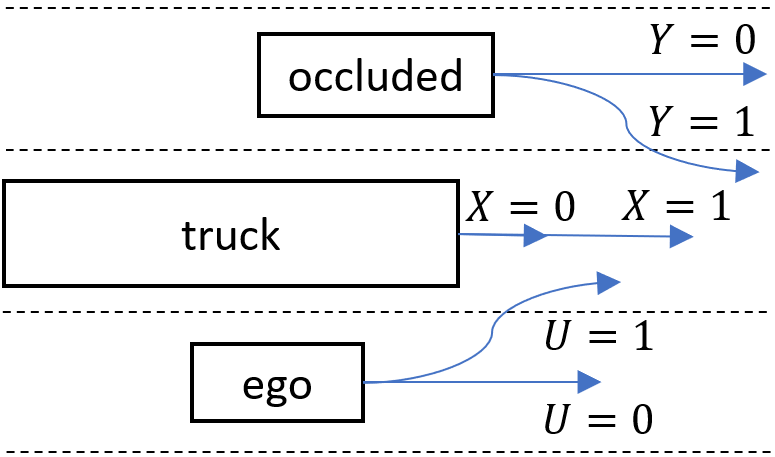}
	\caption{Real-world example where the value of perception without prediction can be negative: observing a truck slowing down without considering a potentially cutting-in vehicle occluded by the truck may result in a dangerous lane change of ego vehicle.}
	\label{fig:voperception_ex}
\end{figure}

The above numerical example may be used to model the following real-world driving situation illustrated in Fig.~\ref{fig:voperception_ex}: the ego vehicle need to decide whether to perform a left lane change, based on the observation of the behavior of a truck in the adjacent lane, which may be influenced by another vehicle in the truck's neighboring lane occluded by the truck.  The occluded vehicle's behavior can be $Y=0$ for lane keep or $Y=1$ for right lane change; the truck's behavior can be $X=0$ for braking or $X=1$ for not braking; while the ego vehicle's action can be $U=0$ for lane keep or $U=1$ for left lane change. The marginal distributions $P_X$ and $P_Y$ indicate that it is unlikely a-priori for the truck to brake or for the occluded vehicle to make a right lane change; while the conditional distribution $P_{Y|X}$ indicates that when the truck brakes, it is likely that some occluded vehicle is making a right lane change. The loss function $\ell=0$ indicates safe driving, $\ell=1$ indicates collision or unsafe driving, and $\ell=-1$ at $(x,y,u)=(0,0,1)$ indicates the incentive for a safe left lane change.
Without observing the truck's behavior, common sense would keep ego vehicle safe by preventing it from a dangerous left lane change. On the other hand, observing truck's braking but not taking its implication on the occluded vehicle's behavior into consideration, a left lane change would lead the ego vehicle to a dangerous situation. 

In general, the potentially negative value of perception explains that, it may be less risky for a decision maker to take a conservative action when they cannot clearly perceive the  situation, than acting with limited consciousness by only reacting to immediately perceivable factors but ignoring the interactions between them and the invisible ones.
The full potential of perception needs to be realized by the additional operation of prediction, as discussed in the next two subsections.

\subsection{ Value of prediction}
The value of prediction can be defined as the reduction of risk when making use of $X$ as well as predicting $Y$ based on it in decision making, compared with just using $X$ and treating $Y$ as independent of $X$. Formally, it can be defined as
\begin{align}\label{eq:vopred_def}
 \E[\ell(X,Y,\bar\psi(X))] - \E[\ell(X,Y,\psi^*(X))]
	\end{align}
    where
\begin{align}\label{eq:psi*_def}
\psi^*\deq\argmin_{\psi:\sX\rightarrow\sU} \E[\ell(X, Y,\psi(X))]
   \end{align}
is the best policy when using $X$ and the true joint distribution $P_{X,Y}$ in decision making.
Given $X=x$, $\psi^*(x)$ can be realized as 
\begin{align}
\psi^*(x) = \argmin_{u\in\sU}\E[\ell(x,Y,u)|X=x] . \label{eq:pred_psi}
\end{align}
In other words, $\psi^*$ takes both $X$ and the posterior $P_{Y|X}$ into account when coming up with the best action to take.
Computationally, the operation of predicting $Y$ based on $X$ means the evaluation of the posterior $P_{Y|X=x}$ for a given $x$, which is then used in the computation of \eqref{eq:pred_psi}.
From the definition of $\psi^*$, we know that the value of prediction is always nonnegative.

\subsection{Value of perception together with prediction}
We can also define the value of perception together with prediction as the reduction of risk when using $X$ and predicting $Y$ in decision making, compared with only using $P_{X,Y}$:
\begin{align}\label{eq:vo_percep_pred_def}
 \E[\ell(X,Y,u^*)] - \E[\ell(X,Y,\psi^*(X))]
	\end{align}
which simply equals the sum of the value of perception and the value of prediction defined above.
From the definition of $\psi^*$, we know that the value of perception together with prediction is always nonnegative.
When the loss only depends on $(Y,U)$, the value of perception together with prediction defined above becomes the classical value of information in decision making as defined in the early literature, e.g. in \cite{DeGroot62}
.  
\subsection{ Value of communication}
When the decision maker not only uses observable $X$, but also can access $Y$ through some form of communication, the risk in decision making can be further reduced. This reduction can be defined as the value of communication:
\begin{align}\label{eq:vopcomu_def}
 \E[\ell(X,Y,\psi^*(X))] - \E[\ell(X,Y,\psi^{**}(X,Y))]
	\end{align}
where
\begin{align}\label{eq:psi**_def}
\psi^{**}\deq\argmin_{\psi:\sX\times\sY\rightarrow\sU} \E[\ell(X, Y,\psi(X,Y))]
\end{align}
is the best communication-enabled policy that uses both $X$ and $Y$ as well as $P_{X,Y}$ in decision making. 
Given $X=x$ and $Y=y$, $\psi^{**}(x,y)$ can be realized as 
\begin{align}
\psi^{**}(x,y) = \argmin_{u\in\sU}\ell(x,y,u) ,
\end{align}
which does not rely on $P_{X,Y}$.
Since a suboptimal communication-enabled policy defined as $\psi(x,y) = \psi^*(x)$ that ignores $Y$ can achieve $\E[\ell(X,Y,\psi^*(X))]$, we know that the value of communication is always nonnegative.

\section{Information-theoretic realizations and interpretations}
\begin{figure*}[t]
	\centering
	\includegraphics[scale = 0.3]{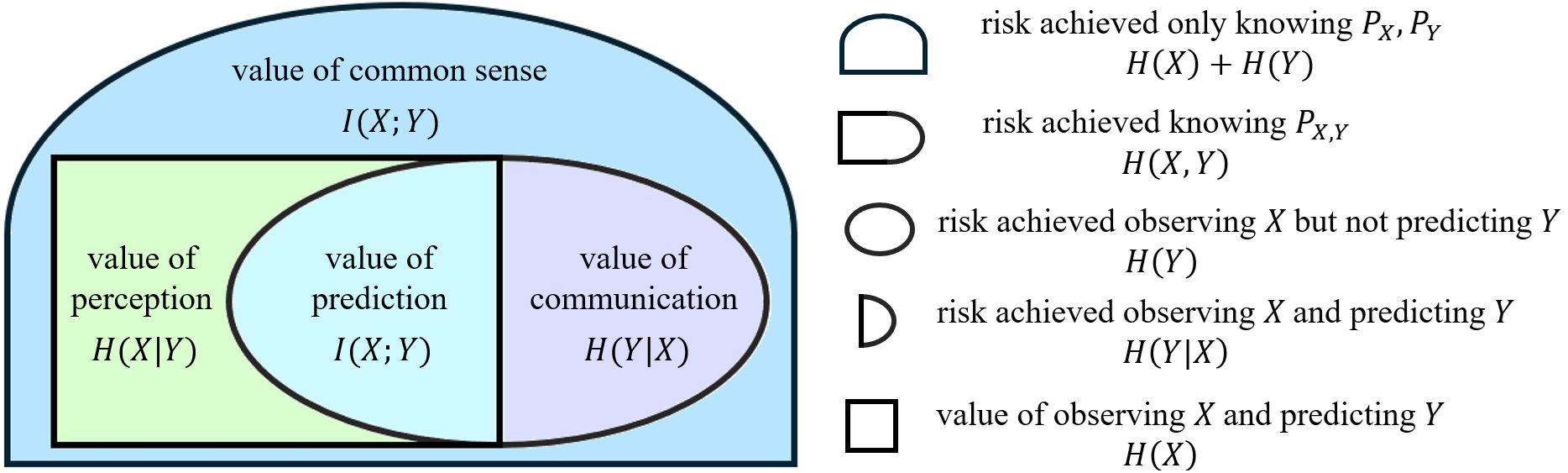}
	\caption{Values of common sense, perception, prediction, and communication, as well as their information-theoretic analogues, are shown as colored regions. Interpretations of some combined regions are shown on the right.}
	\label{fig:vop_diag}
\end{figure*}
The values of different operations defined above have information-theoretic realizations as commonly used information measures. These information measures in turn provide information-theoretic interpretations or analogues of the defined values.
Consider the case where $X$ and $Y$ are jointly distributed discrete random variables, the action space is the set of joint distributions on $\sX\times\sY$, and the loss function is $\ell(x,y,u) = -\log u(x,y)$, that is, the negative log-likelihood of $(x,y)$ under distribution $u$.
In this case, the best actions and policies defined above become
$
\bar u = P_X P_Y
$,
$
u^* = P_{X,Y}
$,
$
    \bar\psi(x) = \delta_x P_Y
$,
$
    \psi^*(x) = \delta_x P_{Y|X=x}
$,
$
    \psi^{**}(x,y) = \delta_x \delta_y
$,
where $\delta_x$ and $\delta_y$ are one-hot distributions at $x$ and $y$ respectively.
The various risks, or minimum achievable expected losses, become
\begin{equation}
    \begin{aligned}
&
 \E[\ell(X,Y,\bar u)] = \E[-\log (P_X(X) P_Y(Y))] 
 = H(X) + H(Y) \\
& \E[\ell(X,Y,u^*))] = \E[-\log P_{X,Y}(X,Y)] 
 = H(X,Y) \\
& \E[\ell(X,Y,\bar\psi(X))] = \E[-\log(\delta_{X}(X)P_{Y}(Y))] 
 = H(Y) \\
& \E[\ell(X,Y,\psi^*(X))] = \E[-\log(\delta_{X}(X)P_{Y|X}(Y|X))] 
 = H(Y|X) \\
& \E[\ell(X,Y,\psi^{**}(X,Y))] = \E[-\log(\delta_{X}(X)\delta_{Y}(Y))] 
 = 0 
\end{aligned}
\end{equation}
The various values defined above become
\begin{itemize}[leftmargin=*]
\item
value of common sense: \quad
$
  H(X) + H(Y) - H(X,Y) 
 = I(X;Y)
$
\item
value of perception: \quad
$
  H(X,Y) - H(Y)
 = H(X|Y)
$
\item 
value of prediction: \quad
$  
H(Y) - H(Y|X)
 = I(X;Y)
$
\item 
value of perception together with prediction: \quad
$
  H(X,Y) - H(Y|X)
 = H(X)
$
\item 
 value of communication: \quad
$
  H(Y|X) - 0
 = H(Y|X)
$
\end{itemize}
It is interesting to see that in this information-theoretic setting, the values of common sense and prediction both are the mutual information between $X$ and $Y$.
It is also notable that the uncertainty of $X$ can be fully removed in this decision-making problem only when both observing $X$ and predicting $Y$ based on $X$, as
\begin{align}
    H(X,Y) - H(Y|X) &= H(X|Y) + I(X;Y) 
    = H(X) .
\end{align}

The defined quantities and their information-theoretic realizations are illustrated in Fig.~\ref{fig:vop_diag}.
The intimate connections between the defined values and the information measures are further studied through their functional properties in the next section.

\section{Functional properties of defined values}\label{appd:func_prop}
We have seen that the defined values are always nonnegative except for the value of perception without prediction.
In this section we further study some basic functional properties of the defined values, mainly concavity and convexity in the underlying distributions. 
These properties are inherited as key properties by their information-theoretic realizations \cite{it_book_pw}. 

The results are built on four lemmas.
The first three are about properties of the risks achieved with different knowledge, which are useful in proving properties of the value of perception together with prediction and the value of communication stated in Propositions~\ref{prop:vo_percep_pred} and~\ref{prop:vo_commun}.

\begin{lemma}\label{lm:R(X,Y)}
$\E[\ell(X,Y,u^*)]$ is concave in the joint distribution $P_{X,Y}$.
\end{lemma}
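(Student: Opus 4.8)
We must show that $R(P_{X,Y}) \deq \E[\ell(X,Y,u^*)] = \min_{u\in\sU}\E[\ell(X,Y,u)]$ is concave in the joint distribution $P_{X,Y}$.

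The key observation is that the risk is a pointwise minimum of linear functionals. Let me verify: for a fixed $u$, the map $P_{X,Y} \mapsto \E_{P_{X,Y}}[\ell(X,Y,u)] = \sum_{x,y} P_{X,Y}(x,y)\,\ell(x,y,u)$ is **linear** in $P_{X,Y}$. Then $u^*$ is the minimizer, so $R(P_{X,Y}) = \min_u(\text{linear in } P_{X,Y})$, and an infimum/minimum of a family of linear (hence concave) functions is concave. This is the standard fact. So the proof should be very short.

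=== MY PROOF PROPOSAL ===

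The plan is to exploit the fact that the risk $\E[\ell(X,Y,u^*)]$ is, by the definition of $u^*$ in \eqref{eq:u*_def}, a pointwise infimum over actions of functionals that are linear in the joint distribution. First I would fix an arbitrary action $u\in\sU$ and observe that the map $P_{X,Y}\mapsto \E[\ell(X,Y,u)]$ is linear in $P_{X,Y}$: writing the expectation as a sum (or integral) $\sum_{x,y}P_{X,Y}(x,y)\,\ell(x,y,u)$, the dependence on $P_{X,Y}$ is manifestly affine, since $\ell(x,y,u)$ does not depend on the distribution. A linear functional is in particular concave.

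Next I would invoke the definition of the risk as a minimization over $u$, so that
\begin{align}
\E[\ell(X,Y,u^*)] = \inf_{u\in\sU}\E[\ell(X,Y,u)] .
\end{align}
The decisive step is the general fact that the pointwise infimum of any family of concave (here, linear) functions is concave. Concretely, given two joint distributions $P^{(0)}_{X,Y}$ and $P^{(1)}_{X,Y}$ and $\lambda\in[0,1]$, let $P^{(\lambda)}_{X,Y}=\lambda P^{(1)}_{X,Y}+(1-\lambda)P^{(0)}_{X,Y}$ and let $u_\lambda$ be its minimizer. Using linearity of the objective in the distribution for the fixed action $u_\lambda$, and then the fact that $u_\lambda$ is suboptimal for each of $P^{(0)}_{X,Y}$ and $P^{(1)}_{X,Y}$, I would obtain
\begin{align}
\E_{P^{(\lambda)}}[\ell(X,Y,u_\lambda)]
=\lambda\,\E_{P^{(1)}}[\ell(X,Y,u_\lambda)]+(1-\lambda)\,\E_{P^{(0)}}[\ell(X,Y,u_\lambda)]
\ge \lambda\,R(P^{(1)}_{X,Y})+(1-\lambda)\,R(P^{(0)}_{X,Y}),
\end{align}
which is exactly the concavity inequality.

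I do not expect any serious obstacle here, as the result is a direct consequence of the min-of-linear-functions structure; the only point requiring mild care is ensuring the infimum is attained (so that $u^*$ and $u_\lambda$ exist) or, if attainment is not assumed, phrasing the argument with infima throughout, which does not affect the concavity conclusion. A measure-theoretic remark about whether the objective is genuinely linear when $\sX,\sY$ are continuous (so the sum becomes an integral) is worth noting, but linearity of the integral in the underlying measure makes this immediate.
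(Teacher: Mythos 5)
Your proof is correct and takes essentially the same approach as the paper: observe that $\E[\ell(X,Y,u)]$ is linear in $P_{X,Y}$ for each fixed $u$, and conclude that the minimum over $u$ of these linear functionals is concave. Your explicit verification of the concavity inequality via the suboptimality of $u_\lambda$ is just a spelled-out version of the same standard argument the paper invokes.
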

\begin{proof}
By noting that $\E[\ell(X,Y,u)]$ is linear in the joint distribution $P_{X,Y}$ for each $u\in\sU$, we know that $ \E[\ell(X,Y,u^*] = \min_{u\in\sU}\E[\ell(X,Y,u)]$ is the minimum of a set of linear functions of $P_{X,Y}$. This implies that $\E[\ell(X,Y,u^*)]$ is concave in $P_{X,Y}$.   
\end{proof}

\begin{lemma}\label{lm:R(X,Y|X)}
$\E[\ell(X,Y,\psi^*(X))]$ has the following properties:
\begin{enumerate}[leftmargin=*]
    \item 
    For a fixed $P_{X}$, it is concave in $P_{Y|X}$.
    \item 
    For a fixed $P_{Y|X}$, it is linear in $P_X$.
\end{enumerate}
\end{lemma}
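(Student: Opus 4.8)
The plan is to write the risk as a $P_X$-weighted average of conditional risks and then analyze the two distributional dependencies separately. For a fixed realization $X=x$, using the realization of $\psi^*$ in \eqref{eq:pred_psi}, define the conditional risk
\begin{align}
r(x) \deq \min_{u\in\sU}\sum_y P_{Y|X}(y\mid x)\,\ell(x,y,u),
\end{align}
so that $\E[\ell(X,Y,\psi^*(X))]=\sum_x P_X(x)\,r(x)$. This decomposition is the crux: for each $x$, the quantity $r(x)$ depends only on the conditional slice $P_{Y|X}(\cdot\mid x)$, while the weights $P_X(x)$ enter only through the outer sum. Both claims then follow by treating $r(x)$ as in Lemma~\ref{lm:R(X,Y)}.

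For the first claim, I would fix $P_X$ and let $P^0_{Y|X}$ and $P^1_{Y|X}$ be two conditional kernels, with the mixture $P^\lambda_{Y|X}=\lambda P^1_{Y|X}+(1-\lambda)P^0_{Y|X}$ taken slicewise for each $x$. Arguing exactly as in Lemma~\ref{lm:R(X,Y)}, for each fixed $x$ the inner objective $\sum_y P_{Y|X}(y\mid x)\,\ell(x,y,u)$ is linear in the slice $P_{Y|X}(\cdot\mid x)$ for every $u$, so $r(x)$, being a pointwise minimum over $u$ of linear functions, is concave in that slice. Hence $r^\lambda(x)\ge \lambda r^1(x)+(1-\lambda)r^0(x)$. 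Multiplying by the nonnegative weights $P_X(x)$ and summing over $x$ preserves this inequality, which yields concavity of $\E[\ell(X,Y,\psi^*(X))]$ in $P_{Y|X}$.

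For the second claim I would fix $P_{Y|X}$; then each $r(x)$ is a constant independent of $P_X$, and $\E[\ell(X,Y,\psi^*(X))]=\sum_x P_X(x)\,r(x)$ is manifestly linear in $P_X$. The only real subtlety — and the step I would be most careful about — is pinning down what ``concave in $P_{Y|X}$'' means: the mixing must be performed on the full conditional kernel (equivalently, slicewise on each $P_{Y|X}(\cdot\mid x)$ while $P_X$ is held fixed), and one should note that such a mixture is again a valid conditional kernel so that $r^\lambda(x)$ is well defined. Once this slicewise interpretation is fixed, everything reduces cleanly to the min-of-linear-functions fact already established for Lemma~\ref{lm:R(X,Y)}, and the $P_X$-weighting is a routine nonnegative-combination argument.
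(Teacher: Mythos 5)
Your proposal is correct and follows essentially the same route as the paper: decompose the risk as $\sum_x P_X(x)\,r(x)$ with $r(x)=\min_{u}\E[\ell(x,Y,u)|X=x]$, obtain concavity of each $r(x)$ in the slice $P_{Y|X=x}$ as a minimum of linear functions, and conclude via a nonnegative combination for claim 1 and the $P_X$-independence of $r(x)$ for claim 2. Your added care about the slicewise meaning of mixing conditional kernels is a nice explicit touch but does not change the argument.
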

\begin{proof}
We have
\begin{align}
&\E[\ell(X,Y,\psi^*(X))] \nonumber \\
=& \sum_{x\in\sX}P_X(x) \E[\ell(x,Y,\psi^*(x))|X=x] \\
=& \sum_{x\in\sX} P_X(x) \min_{u\in\sU} \E[\ell(x,Y,u)|X=x] .
\end{align}
As $\E[\ell(x,Y,u)|X=x]$ is linear in $P_{Y|X=x}$, we know that $\min_{u\in\sU} \E[\ell(x,Y,u)|X=x]$ is concave in $P_{Y|X=x}$ for each $x\in\sX$.
With a fixed $P_X$, $\E[\ell(X,Y,\psi^*(X))]$ is a nonnegative linear combination of concave functions in $P_{Y|X}$, thus is still concave in $P_{Y|X}$, which proves the first claim.
The second claim simply follows from the fact that $\min_{u\in\sU} \E[\ell(x,Y,u)|X=x]$ does not depend on $P_X$ when $P_{Y|X=x}$ is fixed.
\end{proof}

\begin{lemma}\label{lm:R(X,Y|X,Y)}
$\E[\ell(X,Y,\psi^{**}(X,Y))]$ is linear in the joint distribution $P_{X,Y}$.
\end{lemma}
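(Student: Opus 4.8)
The plan is to expand the risk $\E[\ell(X,Y,\psi^{**}(X,Y))]$ explicitly as a sum over realizations $(x,y)$ and to exploit the fact that, unlike in Lemma~\ref{lm:R(X,Y)}, the optimal communication-enabled policy $\psi^{**}$ minimizes the loss pointwise and independently of any distribution. First I would write
\begin{align}
\E[\ell(X,Y,\psi^{**}(X,Y))] = \sum_{x\in\sX}\sum_{y\in\sY} P_{X,Y}(x,y)\,\ell(x,y,\psi^{**}(x,y)) .
\end{align}

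Next, recalling the realization $\psi^{**}(x,y) = \argmin_{u\in\sU}\ell(x,y,u)$ established above, I would substitute to obtain
\begin{align}
\ell(x,y,\psi^{**}(x,y)) = \min_{u\in\sU}\ell(x,y,u) \deq c(x,y) ,
\end{align}
and then observe that each $c(x,y)$ depends only on the loss function $\ell$ and the pair $(x,y)$, but \emph{not} on the joint distribution $P_{X,Y}$. The risk therefore takes the form
\begin{align}
\E[\ell(X,Y,\psi^{**}(X,Y))] = \sum_{x\in\sX}\sum_{y\in\sY} c(x,y)\,P_{X,Y}(x,y) ,
\end{align}
which is a fixed linear combination of the entries $P_{X,Y}(x,y)$ with coefficients $c(x,y)$ that are constant in $P_{X,Y}$. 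This is by definition a linear functional of $P_{X,Y}$, which completes the proof.

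I do not anticipate a genuine obstacle here; the substantive content is merely to notice that the minimization defining $\psi^{**}$ is carried out separately for each realization $(x,y)$, so no distribution-dependent averaging sits inside the minimum. This is precisely what distinguishes the present case from Lemma~\ref{lm:R(X,Y)}, where $u^*$ is a single action minimizing the full expectation $\E[\ell(X,Y,u)]$: there the minimum of a family of linear functions of $P_{X,Y}$ yields concavity, whereas here the per-realization minimum pulls out a constant $c(x,y)$ and leaves only a linear dependence on $P_{X,Y}$.
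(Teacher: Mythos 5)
Your proof is correct and follows essentially the same route as the paper's: expand the risk as $\sum_{(x,y)} P_{X,Y}(x,y)\min_{u\in\sU}\ell(x,y,u)$ and observe that the pointwise minimum $\min_{u\in\sU}\ell(x,y,u)$ is a constant independent of $P_{X,Y}$, so the risk is a linear functional of the joint distribution.
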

\begin{proof}\label{appd:pf_lm_R(X,Y|X,Y)}
We have
\begin{align}
& \E[\ell(X,Y,\psi^{**}(X,Y))] \nonumber \\
=& \sum_{(x,y)\in\sX\times\sY}P_{X,Y}(x,y) \ell(x,y,\psi^{**}(x,y)) \\
=& \sum_{(x,y)\in\sX\times\sY}P_{X,Y}(x,y) \min_{u\in\sU} \ell(x,y,u) .
\end{align}   
The claim follows from the fact that $\min_{u\in\sU} \ell(x,y,u)$ does not depend on $P_{X,Y}$.
\end{proof}

The next lemma from \cite{max_gen_ent} is useful in proving the properties of the values of common sense and prediction stated in Proposition~\ref{prop:vo_coms_pred}.
\begin{lemma}\label{lm:D_ell}
Given a loss function $\ell:\sZ\times\sU\rightarrow\R$ and two distributions $P$ and $Q$ on $\sZ$, a decision-theoretic divergence between $P$ and $Q$ with respect to $\ell$ that generalizes the Kullback-Liebler divergence can be defined as
\begin{align}
D_\ell(P,Q) = \E_P[\ell(Z,u_Q)] - \E_P[\ell(Z,u_P)]
\end{align}
where $u_P = \argmin_{u\in\sU}\E_P[\ell(Z,u)]$ and $u_Q = \argmin_{u\in\sU}\E_Q[\ell(Z,u)]$.
This divergence is convex in $P$ when $Q$ is fixed.
\end{lemma}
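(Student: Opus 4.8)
The plan is to decompose $D_\ell(P,Q)$ as a difference of two terms and argue each is affine or concave in $P$ separately, exploiting that for a fixed reference distribution $Q$ the minimizer $u_Q=\argmin_{u\in\sU}\E_Q[\ell(Z,u)]$ does not depend on $P$ at all. Writing $D_\ell(P,Q)=\E_P[\ell(Z,u_Q)]-\E_P[\ell(Z,u_P)]$, I would treat the two expectations independently.

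First I would handle the first term. Since $u_Q$ is a fixed element of $\sU$ once $Q$ is fixed, the map $P\mapsto\E_P[\ell(Z,u_Q)]=\sum_{z\in\sZ}P(z)\,\ell(z,u_Q)$ is a fixed linear functional of $P$, hence affine, and in particular both convex and concave, in $P$. Second, I would recognize the remaining term as the risk under $P$: because $u_P$ is by definition the minimizer of $\E_P[\ell(Z,u)]$, we have $\E_P[\ell(Z,u_P)]=\min_{u\in\sU}\E_P[\ell(Z,u)]$. Exactly as in the proof of Lemma~\ref{lm:R(X,Y)}, for each fixed $u$ the quantity $\E_P[\ell(Z,u)]$ is linear in $P$, so the pointwise minimum over $u\in\sU$ is concave in $P$.

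Combining the two observations, $D_\ell(P,Q)$ is the sum of an affine function of $P$ and the negative of a concave function of $P$; the latter is convex, and affine-plus-convex is convex. This establishes convexity of $D_\ell(P,Q)$ in $P$ for fixed $Q$.

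I do not expect a genuine obstacle here; the only point requiring care is to notice that the dependence on $P$ enters solely through the outer expectation $\E_P[\,\cdot\,]$ and not through $u_Q$, which is frozen once $Q$ is fixed. This is precisely what makes the first term genuinely linear rather than merely convex, and hence what makes the decomposition clean. If instead one tried to vary $Q$ as well, the minimizer $u_Q$ would move and the argument would break, so restricting to fixed $Q$ is essential to the linearity of the first term.
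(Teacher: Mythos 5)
Your argument is correct. Note that the paper itself does not prove this lemma---it is imported from Gr\"unwald and Dawid \cite{max_gen_ent} with only a citation---so there is no in-paper proof to compare against; what you have written is the standard argument and it fills the gap cleanly. Your decomposition $D_\ell(P,Q)=\E_P[\ell(Z,u_Q)]-\min_{u\in\sU}\E_P[\ell(Z,u)]$ is exactly the right one: the first term is linear in $P$ because $u_Q$ is frozen once $Q$ is fixed, and the second term is a pointwise minimum of linear functionals of $P$, hence concave, so its negative is convex; the sum is convex. This is also precisely the technique the paper uses to prove Lemma~\ref{lm:R(X,Y)}, so your proof is consistent in style with the rest of the paper. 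The only caveat worth flagging (which the paper also glosses over) is the implicit assumption that the minimizers $u_P$ and $u_Q$ exist; if the infimum is not attained, the same argument goes through verbatim with $\min$ replaced by $\inf$, since an infimum of linear functions is still concave.
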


Now we can state the major functional properties in the following three propositions.
\begin{proposition}\label{prop:vo_percep_pred}
    The value of perception together with prediction is concave in $P_X$
    for a fixed $P_{Y|X}$.
\end{proposition}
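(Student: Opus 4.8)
The plan is to expand the value of perception together with prediction from its definition \eqref{eq:vo_percep_pred_def} as the difference of two risks,
\begin{align}
\E[\ell(X,Y,u^*)] - \E[\ell(X,Y,\psi^*(X))],
\end{align}
and then argue concavity in $P_X$ termwise, treating the two pieces separately with the lemmas already proved. The whole argument hinges on one structural observation: once $P_{Y|X}$ is held fixed, the joint law factors as $P_{X,Y}(x,y) = P_X(x)\,P_{Y|X}(y\mid x)$, so the map $P_X \mapsto P_{X,Y}$ is \emph{linear}. This lets me transport concavity statements about $P_{X,Y}$ into concavity statements about $P_X$.

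For the first term, I would invoke Lemma~\ref{lm:R(X,Y)}, which states that $\E[\ell(X,Y,u^*)]$ is concave in $P_{X,Y}$. Since composing a concave function with a linear (hence affine) map preserves concavity, and since $P_X \mapsto P_{X,Y}$ is linear for fixed $P_{Y|X}$, it follows immediately that $\E[\ell(X,Y,u^*)]$ is concave in $P_X$ whenever $P_{Y|X}$ is fixed. For the second term, I would apply the second part of Lemma~\ref{lm:R(X,Y|X)}, which says that $\E[\ell(X,Y,\psi^*(X))]$ is linear in $P_X$ for a fixed $P_{Y|X}$.

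Combining the two, the value of perception together with prediction is, for fixed $P_{Y|X}$, a concave function of $P_X$ minus a linear function of $P_X$. Subtracting a linear (equivalently, affine) function from a concave function leaves it concave, so the difference is concave in $P_X$, which is exactly the claim.

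I do not expect a genuine obstacle here, since both building blocks are handed to us by the earlier lemmas. The only point requiring care is the composition step: I must make explicit that fixing $P_{Y|X}$ renders $P_{X,Y}$ linear in $P_X$, so that Lemma~\ref{lm:R(X,Y)}'s concavity in the joint distribution legitimately descends to concavity in the marginal $P_X$. Once that linearity is stated cleanly, the rest is the elementary fact that concave minus linear is concave.
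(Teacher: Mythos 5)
Your proposal is correct and follows essentially the same route as the paper: Lemma~\ref{lm:R(X,Y)} for the concavity of the first term, part~2 of Lemma~\ref{lm:R(X,Y|X)} for the linearity of the second, and then concave minus linear is concave. The only difference is that you spell out the composition step (that fixing $P_{Y|X}$ makes $P_X \mapsto P_{X,Y}$ linear, so concavity in the joint descends to concavity in $P_X$), which the paper leaves implicit; this is a welcome clarification rather than a deviation.
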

\begin{proof}
The value under consideration is $$\E[\ell(X,Y,u^*)] - \E[\ell(X,Y,\psi^*(X))].$$
From Lemma~\ref{lm:R(X,Y)} and \ref{lm:R(X,Y|X)}, we know that with a fixed $P_{Y|X}$, the first term is concave in $P_X$ and the second term is linear in $P_X$, thus the value under consideration is concave in $P_X$.
\end{proof}
This result echoes the information-theoretic realization of the value of perception together with prediction as $H(X)$, which is concave in $P_X$. This value can thus be seen as a generalized form of entropy of $X$, implying that the uncertainty about $X$ in decision making can be fully resolved by observing it and making prediction on $Y$ based on it.

\begin{proposition}\label{prop:vo_commun}
    The value of communication have the following properties:
\begin{enumerate}[leftmargin=*]
    \item 
    For a fixed $P_{X}$, it is concave in $P_{Y|X}$.
    \item 
    For a fixed $P_{Y|X}$, it is linear in $P_X$.
\end{enumerate}
\end{proposition}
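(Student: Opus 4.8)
The plan is to prove each property by decomposing the value of communication from \eqref{eq:vopcomu_def} into its two risk terms and invoking the structural results already established in Lemmas~\ref{lm:R(X,Y|X)} and~\ref{lm:R(X,Y|X,Y)}. I would write the value as $A - B$, where $A = \E[\ell(X,Y,\psi^*(X))]$ and $B = \E[\ell(X,Y,\psi^{**}(X,Y))]$. The behavior of $A$ in both $P_{Y|X}$ (for fixed $P_X$) and $P_X$ (for fixed $P_{Y|X}$) is exactly what Lemma~\ref{lm:R(X,Y|X)} provides, so the whole task reduces to understanding how $B$, which Lemma~\ref{lm:R(X,Y|X,Y)} states is linear in the joint distribution $P_{X,Y}$, behaves as a function of each individual factor in the factorization $P_{X,Y}(x,y) = P_X(x)\,P_{Y|X}(y|x)$.

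For the first claim, I would fix $P_X$ and observe that the map $P_{Y|X} \mapsto P_{X,Y}$ is linear: with each $P_X(x)$ held constant, the coordinate $P_{X,Y}(x,y) = P_X(x)\,P_{Y|X}(y|x)$ is a scalar multiple of the corresponding coordinate of $P_{Y|X}$. Composing this linear map with the linear dependence of $B$ on $P_{X,Y}$ from Lemma~\ref{lm:R(X,Y|X,Y)} shows that $B$ is linear in $P_{Y|X}$ for fixed $P_X$. Since Lemma~\ref{lm:R(X,Y|X)} gives that $A$ is concave in $P_{Y|X}$ for fixed $P_X$, the difference $A - B$ is concave minus linear, hence concave, establishing property~1.

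For the second claim, I would instead fix $P_{Y|X}$ and note by the symmetric argument that $P_X \mapsto P_{X,Y}$ is linear when $P_{Y|X}$ is held constant, so $B$ is again linear in $P_X$. Combined with the second part of Lemma~\ref{lm:R(X,Y|X)}, which gives that $A$ is linear in $P_X$ for fixed $P_{Y|X}$, the difference $A - B$ is a difference of two linear functions and is therefore linear, establishing property~2.

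The only genuine subtlety, and the step I would flag as the main obstacle, is the passage from ``linear in $P_{X,Y}$'' to ``linear in the individual factor.'' This is not automatic for general multilinear maps, but here it is clean because the product $P_X \cdot P_{Y|X}$ is bilinear: fixing either factor renders the joint an affine---indeed linear---function of the other, and composition with a linear functional preserves linearity. I would verify this explicitly at the coordinate level to rule out any worry that constants carried by the fixed factor disturb linearity, after which both claims follow mechanically from the two cited lemmas.
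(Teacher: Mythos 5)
Your proof is correct and follows the same route as the paper: the paper likewise writes the value of communication as the difference of the two risks and cites Lemma~\ref{lm:R(X,Y|X)} and Lemma~\ref{lm:R(X,Y|X,Y)}, leaving implicit the step you spell out, namely that linearity of $\E[\ell(X,Y,\psi^{**}(X,Y))]$ in $P_{X,Y}$ yields linearity in each factor of $P_{X,Y}=P_X P_{Y|X}$ once the other is fixed. Your explicit treatment of that bilinearity point is a faithful elaboration of the paper's one-line argument, not a different approach.
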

\begin{proof}
The value under consideration is $$\E[\ell(X,Y,\psi^*(X))] - \E[\ell(X,Y,\psi^{**}(X,Y))].$$ The claim is a consequence of Lemma~\ref{lm:R(X,Y|X)} and Lemma~\ref{lm:R(X,Y|X,Y)}.
\end{proof}
This result echoes the information-theoretic realization of the value of communication as $H(Y|X)$, which has the same properties. It shows that this value can be seen as a generalized form of conditional entropy of $Y$ given $X$, implying that the residue uncertainty of $Y$ after predicting it based on $X$ in decision making can be fully resolved by communication.

\begin{proposition}\label{prop:vo_coms_pred}
For a fixed pair of $P_X$ and $P_Y$, and the set of probability transition kernels $\mathcal{M}=\{P_{Y|X}:P_{Y|X}\circ P_X = P_Y\}$ that couple the pair, the value of common sense and the value of prediction both are convex in $P_{Y|X}$ on $\mathcal{M}$.
\end{proposition}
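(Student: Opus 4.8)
The plan is to recognize both values as instances of the decision-theoretic divergence $D_\ell$ from Lemma~\ref{lm:D_ell} and then transfer its convexity in the first argument. First I would rewrite the value of common sense. By its definition in~\eqref{eq:vocomsen_def}, $\bar u$ is the optimal action under the product $P_X P_Y$ while $u^*$ is optimal under the true joint $P_{X,Y}$, and both losses are averaged against $P_{X,Y}$. Taking $Z=(X,Y)$, $P=P_{X,Y}$, and $Q=P_X P_Y$ in Lemma~\ref{lm:D_ell}, the value of common sense is exactly $D_\ell(P_{X,Y},\,P_X P_Y)$.

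Next I would rewrite the value of prediction by conditioning~\eqref{eq:vopred_def} on $X=x$. Writing $\ell_x(y,u)\deq\ell(x,y,u)$ as a loss on $\sY\times\sU$, the action $\bar\psi(x)$ is optimal under the marginal $P_Y$ and $\psi^*(x)$ is optimal under the posterior $P_{Y|X=x}$, with both inner losses averaged against the true posterior $P_{Y|X=x}$. Hence the per-$x$ term is $D_{\ell_x}(P_{Y|X=x},\,P_Y)$, and the value of prediction equals the $P_X$-weighted sum $\sum_{x\in\sX} P_X(x)\,D_{\ell_x}(P_{Y|X=x},\,P_Y)$.

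The crux is then to exploit that on $\mathcal{M}$ both $P_X$ and $P_Y$ are held fixed. I would first check that $\mathcal{M}$ is convex: a convex combination of kernels satisfying $P_{Y|X}\circ P_X=P_Y$ again satisfies it, since this constraint is linear in $P_{Y|X}$ for fixed $P_X$, so convex combinations are well defined on $\mathcal{M}$. For common sense the reference $Q=P_X P_Y$ is then fixed, and the map $P_{Y|X}\mapsto P_{X,Y}=P_X\cdot P_{Y|X}$ is linear, so a convex combination of kernels induces the same convex combination of joints; convexity of $D_\ell(\cdot,\,P_X P_Y)$ from Lemma~\ref{lm:D_ell} therefore passes to convexity in $P_{Y|X}$. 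For prediction each reference $P_Y$ is fixed and each slice $P_{Y|X=x}$ is a linear readout of $P_{Y|X}$, so every summand $D_{\ell_x}(\cdot,\,P_Y)$ is convex in $P_{Y|X}$, and a nonnegative combination with the fixed weights $P_X(x)$ preserves convexity.

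I expect the main obstacle to be bookkeeping the divergence orientation rather than any deep difficulty: one must verify that in both values the averaging is against the \emph{true} distribution (the first argument $P$) while the suboptimal action is computed from the \emph{fixed reference} distribution (the second argument $Q$), which is precisely the configuration in which Lemma~\ref{lm:D_ell} guarantees convexity. For prediction there is the additional subtlety that the fixed-$Q$ hypothesis must hold simultaneously for every $x$ with the common reference $Q=P_Y$, which is exactly what pinning $P_Y$ on $\mathcal{M}$ secures. This is consistent with the information-theoretic realization, where both values equal $I(X;Y)$, known to be convex in $P_{Y|X}$ for fixed marginals.
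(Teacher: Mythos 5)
Your proposal is correct and follows essentially the same route as the paper: both identify the value of common sense as $D_\ell(P_{X,Y}, P_X P_Y)$ and the value of prediction as the $P_X$-weighted sum $\sum_{x}P_X(x)D_{\ell,x}(P_{Y|X=x},P_Y)$, verify that $\mathcal{M}$ and the induced set of joints are closed under convex combinations, and then transfer the convexity of $D_\ell$ in its first argument (Lemma~\ref{lm:D_ell}) through the linear maps $P_{Y|X}\mapsto P_XP_{Y|X}$ and $P_{Y|X}\mapsto P_{Y|X=x}$. Your attention to the divergence orientation and to the role of the fixed marginals on $\mathcal{M}$ matches the paper's argument exactly.
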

\begin{proof}
First note that the set $\mathcal{M}$ is a linear space, as $(\lambda P_{Y|X} + (1-\lambda)Q_{Y|X})\circ P_{X} = P_Y$ for all $P_{Y|X}, Q_{Y|X}\in \mathcal{M}$ and $\lambda\in[0,1]$.
In addition, the set of joint distributions induced by $P_X$ and $\mathcal{M}$, denoted by $\Pi = \{P_X P_{Y|X}: P_{Y|X}\in\mathcal{M}\}$ is also a linear space, as the marginal distributions of $\lambda P_{X,Y} + (1-\lambda)Q_{X,Y}$ are $P_X$ and $P_Y$ for all $P_{X,Y}, Q_{X,Y}\in {\Pi}$ and $\lambda\in[0,1]$.
It is thus meaningful to discuss convex functions defined on these sets.

The value of common sense can be written as
\begin{align}
\E[\ell(X,Y,\bar u)] - \E[\ell(X,Y,u^*)]  
=   D_{\ell}(P_{X,Y}, P_X P_Y) 
\end{align}
where we have used the definitions of $\bar u$ and $u^*$ and the decision-theoretic divergence in Lemma~\ref{lm:D_ell}.
From the property of the divergence stated in Lemma~\ref{lm:D_ell}, we know that with $P_X$ and $P_Y$ fixed, the value of common sense is convex in $P_{X,Y}$ on $\Pi$; as $P_{X,Y}$ linearly depends on $P_{Y|X}$ when $P_X$ is fixed, the value of common sense is also convex in $P_{Y|X}$ on $\mathcal M$.

The value of prediction can be written as
\begin{align}
& \E[\ell(X,Y,\bar\psi(X))] - \E[\ell(X,Y,\psi^*(X))]  \nonumber \\
= & \sum_{x\in\sX} P_X(x)  \big( \E[\ell(x,Y,\bar\psi(x)|X=x] - \E[\ell(x,Y,\psi^*(x)|X=x] \big) \\
= & \sum_{x\in\sX} P_X(x)  D_{\ell,x}(P_{Y|X=x}, P_Y) 
\end{align}
where in the last step we have used the the fact that $\bar \psi(x) = \argmin_{u\in\sU}  \E[\ell(x,Y,u)]$ and $\psi^*(x) = \argmin_{u\in\sU}  \E[\ell(x,Y,u)|X=x]$ and the definition of the decision-theoretic divergence.
From the property of the divergence stated in Lemma~\ref{lm:D_ell}, we know that with a fixed $P_Y$, $D_{\ell,x}(P_{Y|X=x}, P_Y)$ is convex in $P_{Y|X=x}$ for all $x\in\sX$.
With $P_X$ fixed as well, the value of prediction becomes a nonnegative linear combination of convex functions in $P_{Y|X}$ on $\mathcal{M}$, thus is convex in $P_{Y|X}$ on $\mathcal{M}$.
\end{proof}
This result echoes the information-theoretic realization of the values of common sense and prediction as $I(X;Y)$, which has a stronger property that it is convex in $P_{Y|X}$ as long as $P_X$ is fixed.
It shows that these two values both can be seen as a generalized form of mutual information between $X$ and $Y$, implying that the uncertainty of $Y$ in decision making can be partially resolved by using its statistical dependence on $X$.

\section{Extension to multi-agent settings and applications}
We can extend the definitions of the values of different operations in decision making to problems involving one decision maker and multiple agents. The extended results may be used to address questions on the importance and the best order of perception and prediction of the agents in such decision-making problems.

Consider the problem where the decision maker needs to take an action $u\in\sU$ to minimize the expected loss $\E[\ell(X^n, Y^n, u)]$ influenced by $n$ agents. The $i$th agent has an observable state $X_i\in\sX$ which could be their current status, and an unobservable state $Y_i\in\sY$ which could be their action to take or their future status, for $i=1,\ldots,n$. There is a joint distribution $P_{X^n,Y^n}$ of $X^n = (X_1,\ldots,X_n)$ and $Y^n = (Y_1,\ldots, Y_n)$, known to the decision maker. The action of the decision maker can be taken based on $P_{X^n,Y^n}$ and $X^n$.
Using the concepts developed in the previous sections, we may gauge the importance of an agent by examining the value of observing and making prediction based on that agent. For the $i$th agent, this value can be defined as
\begin{align}
\E[\ell(X^n, Y^n, u^*)] - \E[\ell(X^n, Y^n, \psi_i^*(X_i))]
\end{align}
where we redefine $u^*$ to be 
$
u^* \deq \argmin_{u\in\sU} \E[\ell(X^n, Y^n, u)] 
$,
and define 
$
\psi_i^* \deq \argmin_{\psi:\sX\rightarrow\sU}\E[\ell(X^n, Y^n, \psi(X_i))] 
$.
This value reflects the importance of taking the $i$th agent into account for decision making without or before taking any other agent into account. If only one agent is allowed to be chosen to be observed and used to predict other agents, the one with the greatest value defined above should be chosen.



Another way of defining the value or importance of the $i$th agent would be the reduction of risk by observing and making prediction based on this agent while the other agents are already taken into account:
$
\E[\ell(X^n, Y^n, \psi^{*n\setminus i}(X^{n\setminus i}))] - \E[\ell(X^n, Y^n, \psi^{*n}(X^n))]
$,
where 
$
\psi^{*n\setminus i} \deq \argmin_{\psi:\sX^{n-1} \rightarrow \sU}\E[\ell(X^n, Y^n, \psi(X^{n\setminus i}))] 
$
and 
$
\psi^{*n} \deq \argmin_{\psi:\sX^n\rightarrow\sU} \E[\ell(X^n, Y^n, \psi(X^n))] 
$
are the best policies when taking 
$X^{n\setminus i} \deq (X_1,\ldots, X_{i-1}, X_{i+1},\ldots,X_n)
$
and $X^n$ respectively into account for decision making.
As both definitions are viable for agent importance, we see that the importance of an agent essentially depends on its order in perception and prediction. In other words, it depends on which other agents have already been taken into account for decision making before observing and making predictions based on that agent.
This leads to an answer to the second question. 

The best order $(i_1,\ldots,i_n)$ to run perception and prediction on agents can be determined as
\begin{align}
i_1 = \argmin_{i\in[n]}  \E[\ell(X^n, Y^n, \psi_i^*(X_i))]
\end{align}
and for $k=2,\ldots,n$
\begin{align}
i_k = \argmin_{i\in[n]\setminus\{i_1,\ldots, i_{k-1}\}}  \E[\ell(X^n, Y^n, \psi^{*}_{i^{k-1},i}(X_{i_1},\ldots,X_{i_{k-1}}, X_i))]
\end{align}
where
\begin{align}
\psi^{*}_{i^{k-1},i} \deq \argmin_{\psi:\sX^k\rightarrow\sU} \E[\ell(X^n, Y^n, \psi(X_{i_1},\ldots,X_{i_{k-1}}, X_i))] .
\end{align}
Equivalently, $i_k$ maximizes among $i\in[n]\setminus\{i_1,\ldots, i_{k-1}\}$ the value of the $k$th agent to be taken into account for decision making:
\begin{align}
\E[\ell(X^n, Y^n, \psi^*_{i^{k-1}}(X_{i_1},\ldots,X_{i_{k-1}}))] 
- 
\E[\ell(X^n, Y^n, \psi^*_{i^{k-1},i}(X_{i_1},\ldots,X_{i_{k-1}}, X_i))] .
\end{align} 

Note that the above definitions and analysis only rely on the knowledge of the loss function and the joint distribution $P_{X^n,Y^n}$. With such knowledge, the importance and the best ordering of the agents for perception and prediction can be obtained in principle without making any observations, and the results can be used as ground truths for training machine learning models for estimating agent importance and prioritization in the absence of such knowledge, e.g.\ the machine learning based model in \cite{rank2tell}.
The method of analysis can also be extended to the situations where some observations are already made, e.g. by replacing $P_{X^n,Y^n}$ with $P_{X^{n}_{k+1},Y^n|X^k=x^{k}}$ if the observable states of the first $k$ agents have already been observed. 

\section{Conclusion}
In this work, we  decompose the operations needed for the decision-making procedure, formalize and examine the value of each operation, with an attempt to understand these operations in a coherent framework. Drawing granular parallels to information measures, both in concrete forms and in general math properties, 
the defined values  also provide new and deepened information-theoretic perspectives on decision making. We hope the work can help to answer practical questions such as quantifying the importance of paying attention to the agents in principled manners. 
We also look out to extensions of the definitions and analyses to dynamic or sequential decision makings. 

\section*{Acknowledgment}
The author would like to thank Behzad Dariush, Teruhisa Misu, Enna Sachdeva, Chenran Li and Christian Goerick for helpful discussions.
The title of the paper follows the one of a talk ``One if by Land, Two if by Sea'' given by Prof.\ Robert Gallager \cite{Gallager_talk_2015} on the occasion of Prof.\ Bruce Hajek's sixtieth birthday in October 2015 at Coordinated Science Laboratory, University of Illinois at Urbana-Champaign.

\appendix
\setcounter{lemma}{0}
\renewcommand{\thelemma}{\Alph{section}\arabic{lemma}}

\bibliography{vop}

\end{document}